\documentclass[12pt]{article}
\usepackage{fullpage,hyperref,bbm,amssymb,amsfonts,amsthm}
\usepackage[matrix,frame,arrow]{xy}

%
%

\usepackage[matrix,frame,arrow]{xy}
\usepackage{amsmath}
\newcommand{\bra}[1]{\left\langle{#1}\right\vert}
\newcommand{\ket}[1]{\left\vert{#1}\right\rangle}
\newcommand{\qw}[1][-1]{\ar @{-} [0,#1]}
\newcommand{\qwx}[1][-1]{\ar @{-} [#1,0]}


\newcommand{\gate}[1]{*{\xy *+<.6em>{#1};p\save+LU;+RU **\dir{-}\restore\save+RU;+RD **\dir{-}\restore\save+RD;+LD **\dir{-}\restore\POS+LD;+LU **\dir{-}\endxy} \qw}






\newcommand{\control}{*!<0em,.025em>-=-{\bullet}}

\newcommand{\ctrl}[1]{\control \qwx[#1] \qw}



\newcommand{\multigate}[2]{*+<1em,.9em>{\hphantom{#2}} \qw \POS[0,0].[#1,0];p !C *{#2},p \save+LU;+RU **\dir{-}\restore\save+RU;+RD **\dir{-}\restore\save+RD;+LD **\dir{-}\restore\save+LD;+LU **\dir{-}\restore}
\newcommand{\ghost}[1]{*+<1em,.9em>{\hphantom{#1}} \qw}



\newcommand{\lstick}[1]{*!R!<.5em,0em>=<0em>{#1}}


\newcommand{\Qcircuit}[1][0em]{\xymatrix @*[o] @*=<#1>}




\newcommand{\<}{\langle}
\renewcommand{\>}{\rangle}
\newcommand{\C}{\mathbb{C}}
\newcommand{\cA}{\mathcal{A}}
\newcommand{\cB}{\mathcal{B}}

\newcommand{\cM}{\mathcal{M}}

\newcommand{\E}{\mathrm{\mathbf{E}}}
\newcommand{\Var}{\mathrm{\mathbf{Var}}}
\newcommand{\EE}[1]{\E\left(#1\right)}
\newcommand{\VV}[1]{\Var\left(#1\right)}

\newcommand{\R}{\mathbb{R}}
\renewcommand{\ket}[1]{\left| #1\right\rangle}      
\renewcommand{\bra}[1]{\left\langle #1\right|}      
\newcommand{\ii}{\mathbb{I}}
\newcommand{\ep}{\epsilon}
\newtheorem{definition}{Definition}
\newtheorem{theorem}{Theorem}
\newtheorem{lemma}{Lemma}

\begin{document}
	\title{Quantum Speed-up for Approximating Partition Functions}
		\author{Pawel Wocjan\thanks{School of Electrical Engineering and Computer
		Science,
		University of Central Florida, Orlando, FL~32816, USA. Email:
		\texttt{wocjan@eecs.ucf.edu}}
		\quad	
		Chen-Fu Chiang\thanks{School of Electrical Engineering
		and Computer Science,
		University of Central Florida, Orlando, FL~32816, USA. Email:
		\texttt{cchiang@eecs.ucf.edu}}
		\quad	
		Anura Abeyesinghe\thanks{School of Electrical Engineering
		and Computer Science, University of Central Florida, Orlando, FL~32816, USA.}
		\quad	
		Daniel Nagaj		\thanks{Research Center for Quantum Information, Institute of Physics, Slovak
		Academy of Sciences, D\'ubravsk\'a cesta 9, 84215 Bratislava, Slovakia, and
		Quniverse, L\'{i}\v{s}\v{c}ie \'{u}dolie 116, 84104, Bratislava, Slovakia.
		Email: \texttt{daniel.nagaj@savba.sk}}}
	\maketitle
	
\begin{abstract}
We achieve a quantum speed-up of fully polynomial randomized approximation schemes (FPRAS) for estimating partition functions that combine simulated annealing with the Monte-Carlo Markov Chain method and use non-adaptive cooling schedules.  The improvement in time complexity is twofold:  a quadratic reduction with respect to the spectral gap of the underlying Markov chains and a quadratic reduction with respect to the parameter characterizing the desired accuracy of the estimate output by the FPRAS.  Both reductions are intimately related and cannot be achieved separately.

First, we use Grover's fixed point search, quantum walks and phase estimation to efficiently prepare approximate coherent encodings of stationary distributions of the Markov chains. The speed-up we obtain in this way is due to the quadratic relation between the spectral and phase gaps of classical and quantum walks.  Second, we generalize the method of quantum counting, showing how to estimate expected values of quantum observables.  Using this method instead of classical sampling, we obtain the speed-up with respect to accuracy.  

\end{abstract}

\section{Introduction}
Quantization of classical Markov chains has been crucial in the design of efficient quantum algorithms for a wide range of search problems that outperform their classical counterparts.  We refer the reader to the survey article \cite{Santha} for a detailed account of the rapidly growing collection of quantum-walk-based search algorithms. In this context, we also point to the work \cite{Somma2}, where the authors apply quantized Markov chains to speed up search algorithms based on simulated annealing for finding low-energy states of (classical) Hamiltonians.

In this paper, we extend the scope of use of quantized Markov chains beyond search problems. We show how to employ them to speed up fully polynomial-time randomized approximation schemes for partition functions, based on simulated annealing and the Monte Carlo Markov Chain (MCMC) method. To achieve this improvement, we rely on Szegedy's general method to quantize classical Markov chains \cite{Szegedy,Magniez}, which we review in Appendix \ref{walksection}. 
This method gives us a unitary {\em quantum walk} operator $W(P)$ corresponding to one update step of the classical Markov chain $P$.
The complexity of the classical algorithms we are speeding up is measured in the number of Markov chain invocations. Similarly, 
we express the complexity of our quantum algorithm as the number of times we have to apply a quantum walk operator.
As shown in \cite{efficientWALK}, this operator can be implemented precisely and efficiently.

Sampling from stationary distributions of Markov chains combined with simulated
annealing is at the heart of many clever classical approximation algorithms.
Notable examples include the algorithm for approximating the volume of convex
bodies \cite{Vempala}, the permanent of a non-negative matrix \cite{Vigoda},
and the partition function of statistical physics models such as the Ising
model \cite{Jerrum2} and the Potts model \cite{Vazirani}. Each of these algorithms is a {\em fully
polynomial randomized approximation scheme} (FPRAS), outputting a random number
$\hat{Z}$ within a factor of $(1\pm \ep)$ of the real value $Z$, with probability 
greater than $\frac{3}{4}$, i.e.
\begin{equation}
	\Pr\big[ (1-\epsilon)Z \le \hat{Z} \le (1+\epsilon)Z \big] \ge \frac{3}{4},
	\label{fpras}
\end{equation}
in a number of steps polynomial in $1/\ep$ and the problem size.
		
We show how to use a quantum computer to speed up a class of FPRAS 
for estimating partition functions that rely on simulated annealing 
and the Monte Carlo Markov Chain method (e.g. \cite{Jerrum2,Vazirani}).
Let us start with an outline of these classical algorithms.
Consider a physical system with state space $\Omega$ and an energy function
$E: \Omega\rightarrow \R$, assigning each state $\sigma\in\Omega$
an energy $E(\sigma)$.	
The task is to estimate the Gibbs partition function
		\begin{eqnarray}
		         Z(T) = \sum_{\sigma \in \Omega} e^{-\frac{E(\sigma)}{kT}}
		         \label{partition}
		\end{eqnarray}
		at a desired (usually low) temperature $T_F$. The partition function $Z(T)$ encodes the 
		thermodynamical properties of the system in equilibrium at temperature $T$,
		where the probability of finding the system in state $\sigma$ is given by the Boltzmann distribution
		\begin{eqnarray}
		         \pi_i(\sigma) = \frac{1}{Z(T)}\, e^{-\frac{E(\sigma)}{kT}}.
		         \label{boltzmann}
		\end{eqnarray}
It is hard to estimate $Z(T)$ directly. The schemes we want to speed up
use the following trick. Consider a sequence of decreasing temperatures 
		$
		T_0 \geq T_1 \geq \dots \geq T_{\ell},
		$
		where $T_0$ is a very high starting temperature and $T_{\ell}=T_F$ is the desired
		final temperature. Then, $Z(T_F)$ can be expressed as a telescoping product
		\begin{equation}
			Z(T_F) = Z_0 \, \frac{Z_1}{Z_0} \cdots \frac{Z_{\ell-1}}{Z_{\ell-2}} \frac{Z_\ell}{Z_{\ell-1}} 
			= 
			Z_0 \underbrace{ \left(\alpha_0 \alpha_1 \cdots \alpha_{\ell-2} \alpha_{\ell-1} \right)}_{\alpha}\,,
			\label{telescope}
		\end{equation}
		where $Z_i = Z(T_i)$ stands for the Gibbs partition function at temperature
		$T_i$ and $\alpha_i = Z_{i+1}/Z_{i}$. 
		It is easy to calculate the partition function $Z_0 = Z(T_0)$ at high
		temperature. 
		Next, for each $i$, we can estimate the ratio $\alpha_i$ by sampling from a distribution that is sufficiently close to the
		Boltzmann distribution $\pi_i$ \eqref{boltzmann} at temperature $T_i$ (see Section \ref{classicsection} for more detail). 
		This is possible by using a rapidly-mixing Markov chain $P_i$ whose stationary distribution is equal to the 
		Boltzmann distribution $\pi_i$.
		
		To be efficient, these classical schemes require that
		\begin{enumerate}
		\item we use a cooling schedule such that the resulting ratios $\alpha_i = Z(T_{i+1})/Z(T_i)$ are lower bounded by a constant $c^{-1}$ (to simplify the presentation, we use $c=2$ from now on), 
		\item the spectral gaps of the Markov chains $P_i$ are bounded from below by $\delta$.
		\end{enumerate}
		The time complexity of such FPRAS, i.e., the number of times we have to invoke an update step
		for a Markov chain from $\{P_1,\ldots,P_{\ell-1}\}$, is
		\begin{equation}
		         \tilde{O}\left(\frac{\ell^2}{\delta \cdot \epsilon^2}\right)\,,
		\end{equation}
		where $\tilde{O}$ means up to logarithmic factors.

		Our main result is a general method for `quantizing' such algorithms. 
		
		\begin{theorem}
		\label{mainresult}
    Consider a classical FPRAS for approximating the Gibbs partition function of a physical system
		at temperature $T_F$, satisfying the above conditions. Then, there exists a fully polynomial quantum approximation scheme that uses
		\begin{equation}
		   \tilde{O}\left(\frac{\ell^2}{\sqrt{\delta} \cdot \epsilon}\right)
		\end{equation}
		applications of a controlled version of a quantum walk operator from
		$\{W(P_1),\ldots,W(P_{\ell-1})\}$.
		\end{theorem}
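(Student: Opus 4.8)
The plan is to quantize the classical telescoping estimator $Z(T_F)=Z_0\,\alpha_0\cdots\alpha_{\ell-1}$ step by step, replacing each classical step — draw samples from (a distribution close to) $\pi_i$ and average the random variable $f_i(\sigma)=e^{-E(\sigma)/kT_{i+1}}/e^{-E(\sigma)/kT_i}$, which has mean $\E_{\pi_i}[f_i]=Z_{i+1}/Z_i=\alpha_i$ and takes values in $(0,1]$ since the temperatures decrease (and the energies may be taken non-negative) — by its coherent analogue: prepare a state encoding $\pi_i$ and estimate $\alpha_i$ by amplitude estimation. Two routines carry the two quadratic savings. \emph{(A)} Given a good approximation of the coherent Gibbs state $\ket{\pi_i}=\sum_\sigma\sqrt{\pi_i(\sigma)}\ket\sigma$, produce a good approximation of $\ket{\pi_{i+1}}$ with $\tilde O(1/\sqrt{\delta})$ controlled applications of $W(P_{i+1})$. \emph{(B)} Given a good approximation of $\ket{\pi_i}$, output an estimate of $\alpha_i$ to relative accuracy $\eta$ with $\tilde O(1/\eta)$ controlled applications of $W(P_i)$.

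For (A): start from $\ket{\pi_0}$ at the trivial high temperature $T_0$, which is prepared directly, and iterate. To pass from $\ket{\pi_i}$ to $\ket{\pi_{i+1}}$, adjoin an ancilla register, apply the reversible update map of $P_{i+1}$ (see Appendix~\ref{walksection}), and run phase estimation on $W(P_{i+1})$ as an approximate projector onto its unique $+1$-eigenstate $\sum_\sigma\sqrt{\pi_{i+1}(\sigma)}\ket\sigma\ket{p^{(i+1)}_\sigma}$; undoing the update map returns $\ket{\pi_{i+1}}$. Resolving the phase finely enough costs $\tilde O(1/\sqrt{\delta})$ walk invocations, by the quadratic relation between the spectral gap of $P_{i+1}$ and the phase gap of $W(P_{i+1})$ recalled in Appendix~\ref{walksection}. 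The projection succeeds with probability $|\braket{\pi_i}{\pi_{i+1}}|^2$, which is bounded below by a constant: writing $\pi_{i+1}(\sigma)=\pi_i(\sigma)f_i(\sigma)/\alpha_i$ gives $\braket{\pi_i}{\pi_{i+1}}=\E_{\pi_i}[\sqrt{f_i}]/\sqrt{\alpha_i}\ge\E_{\pi_i}[f_i]/\sqrt{\alpha_i}=\sqrt{\alpha_i}\ge 1/\sqrt2$, using $\sqrt x\ge x$ on $[0,1]$ and $\alpha_i\ge\tfrac12$. To boost this to near-certainty while staying fully coherent, apply Grover's fixed-point search, which needs only $\tilde O(1)$ extra rounds and does not require the overlap to be known exactly. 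Composing $i$ such steps prepares $\ket{\pi_i}$ with $\tilde O(i/\sqrt{\delta})$ walk invocations, so all $\ell$ preparations together cost $\tilde O(\ell^2/\sqrt{\delta})$.

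For (B) — the promised generalization of quantum counting from cardinalities to expectations of observables — adjoin a qubit to $\ket{\pi_i}$ and apply the controlled rotation $\ket\sigma\ket0\mapsto\ket\sigma\big(\sqrt{f_i(\sigma)}\ket0+\sqrt{1-f_i(\sigma)}\ket1\big)$, so the probability of finding that qubit in $\ket0$ equals exactly $\sum_\sigma\pi_i(\sigma)f_i(\sigma)=\alpha_i$. Amplitude estimation then pins down $\alpha_i$ to relative accuracy $\eta=\Theta(\epsilon/\ell)$ — enough that the $\ell$-fold product is within $1\pm\epsilon$ — using $\tilde O(\ell/\epsilon)$ Grover iterates, the $\log$ overhead for amplifying and union-bounding over the $\ell$ ratios being absorbed into $\tilde O$. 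The point for efficiency is that a Grover iterate needs a reflection about $\ket{\pi_i}$, and instead of re-running the $\tilde O(i/\sqrt{\delta})$-cost preparation inside every iterate — which would cost $\tilde O(\ell^3/(\sqrt{\delta}\,\epsilon))$ in total — I would build this reflection \emph{directly} from $W(P_i)$: conjugate by the update map of $P_i$, run phase estimation on $W(P_i)$, flip the sign on the subspace of nonzero phase, and uncompute, at cost $\tilde O(1/\sqrt{\delta})$ per iterate. Estimating one ratio then costs $\tilde O(i/\sqrt{\delta})$ for the single preparation plus $\tilde O(\ell/(\epsilon\sqrt{\delta}))$ for the estimation; summing over the $\ell$ ratios, adding the easily computed $Z_0$, and collecting terms gives $\tilde O(\ell^2/\sqrt{\delta}+\ell^2/(\epsilon\sqrt{\delta}))=\tilde O(\ell^2/(\sqrt{\delta}\,\epsilon))$.

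The main obstacle is error propagation rather than the operation count: every object above — the prepared states $\ket{\tilde\pi_i}$, the phase-estimation ``projectors,'' the fixed-point amplifications, the walk-based reflections, and amplitude estimation itself — is only approximate, and these approximations are composed $\ell$ times and, inside each amplitude estimation, $\tilde O(\ell/\epsilon)$ further times. I would fix target precisions so that each of the $\ell$ preparation steps contributes $o(\epsilon/\ell)$ in the relevant distance, so that the walk-based reflections are close enough in operator norm for the amplitude-estimation analysis to hold with the stated iterate count (here one pays only polylogarithmically, since phase-estimation precision enters $\tilde O$-bounds logarithmically), and so that the accumulated bias and variance of the $\ell$ independent amplitude-estimation outputs keep the product inside $1\pm\epsilon$ with probability at least $\tfrac34$. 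A further point requiring care is that Szegedy's walk acts on the doubled space $\C^\Omega\otimes\C^\Omega$, so the reflections about $\ket{\pi_i}$ must be implemented via the update-map conjugation above, and one must verify that this conjugation, the $f_i$-rotation (which touches only the first register), and the hand-off from level $i$ to level $i+1$ all interact correctly with the discriminant structure of the chains $P_i$.
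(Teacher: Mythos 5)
Your proposal is correct and follows essentially the same route as the paper: coherent Gibbs states driven along the annealing schedule via walk-based fixed-point amplification (the quadratic $\sqrt{\delta}$ saving, with the same $|\braket{\pi_i}{\pi_{i+1}}|^2\ge\frac12$ overlap bound), ratios estimated by amplitude/phase estimation on the $f_i$-rotated state with reflections about $\ket{\pi_i}$ synthesized directly from phase estimation on $W(P_i)$ rather than by re-preparation (the $1/\epsilon$ saving), precision $\Theta(\epsilon/\ell)$ per ratio plus probability boosting, and an error budget for the approximate preparations and reflections. The paper differs only in presentation (it first proves the perfect-sample case and then perturbs it, quoting the preparation and approximate-reflection costs from its earlier work), not in substance.
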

		
The reduction in complexity for our quantum algorithm (in comparison to the
classical FPRAS) is twofold. First, we reduce the factor $1/\delta$ to $1/\sqrt{\delta}$ by using quantum
walks instead of classical Markov chains, and utilizing the quadratic relation
between spectral and phase gaps. As observed in \cite{Magniez},
this relation is at the heart of many quantum search algorithms based on quantum walks
(see e.g. \cite{Santha} for an overview of such quantum algorithms).
Second, we speed up the way to determine the ratios $\alpha_i$ by using the
quantum phase estimation algorithm in a novel way. This results in the reduction of the factor
$1/\epsilon^2$ to $1/\epsilon$.
		
The quantum algorithm we present builds upon our previous work \cite{WA:08}, where two of us have shown how to use quantum walks to approximately prepare coherent encodings
\begin{eqnarray}
	|\pi_i\> = \sum_{\sigma \in \Omega} \sqrt{\pi_i(\sigma)} \ket{\sigma}
\end{eqnarray}
of stationary 
distributions $\pi_i$ of Markov chains $P_i$, provided that the Markov chains are slowly-varying.  Recall that a sequence of Markov chains is called slowly-varying if the stationary distributions of two adjacent chains are sufficiently close to each other.  As we will see later, this condition is automatically satisfied for Markov chains that are used in FPRAS for approximating partition functions.
		
Note that our objective of approximately preparing coherent encodings of stationary distributions is different from the objective in \cite{Richter1}, where the author seeks to speed up the process of approximately preparing density operators encoding stationary distributions. For our purposes, we have to work with coherent encodings because otherwise we could not achieve the 
second reduction from $1/\epsilon^2$ to $1/\epsilon$.

The paper is organized as follows. In Section \ref{classicsection} we review
the classical FPRAS in more detail. 
We present our quantum algorithm in two steps. 
First, in Section \ref{perfect} we explain how our quantum algorithm works,  
assuming that we can perfectly and efficiently prepare coherent encodings of the distributions \eqref{boltzmann}.
Then, in Section \ref{imperfect} we describe the full quantum algorithm, 
dropping the assumption of Section \ref{perfect} 
and using approximate procedures for quantum sample preparation and readout, which are based on the quantum walks. 
We perform a detailed analysis of accumulation of error due to the approximation procedures 
and show that the success probability remains high, establishing Theorem~\ref{mainresult}.
Finally, in Section \ref{annealingsection} we conclude with a discussion
of open questions, the connection of our algorithm to simulated annealing,
and the directions for future research.
		
	
\section{Structure of the Classical Algorithm} \label{classicsection}		

Here we describe the classical approximation schemes in more detail, following closely the presentation in \cite[Section 2.1]{Vazirani}.
Choosing a sequence of temperatures
$T_0 \geq T_1 \geq \dots \geq T_\ell$ 
starting with $T_0=\infty$, and ending with
the desired final (low) temperature $T_\ell = T_F$,
we can express the Gibbs
partition function \eqref{partition} as a telescoping product \eqref{telescope}. 
At $T_0=\infty$, the partition function $Z_0$ is equal to
\begin{eqnarray}
   Z_0 = |\Omega|,
\end{eqnarray}
the size of the state space. On the other hand, for each $i=0,\dots,\ell-1$, we
can estimate the ratio
\begin{eqnarray}
	\alpha_i = \frac{Z_{i+1}}{Z_i}
	\label{alpharatio}
\end{eqnarray}
in \eqref{telescope} as follows.  Let $X_i \sim \pi_i$ denote a random state chosen according to the Boltzmann distribution $\pi_i$, i.e., 
\begin{equation}
	\Pr(X_i=\sigma)=\pi_i(\sigma)\,.
\end{equation}
Define a new random variable $Y_i$ by 
\begin{equation}
	Y_i = e^{-(\beta_{i+1}-\beta{i})\, E(X_i)},
\label{yvariable}
\end{equation}
where $\beta_i = (kT_i)^{-1}$ is the inverse temperature ($k$ is the Boltzmann constant). 
This $Y_i$ is an unbiased estimator for $\alpha_i$ since 
\begin{eqnarray}
	\EE{Y_i} 
	& = & \sum_{\sigma \in \Omega} \pi_i(\sigma) \, e^{-(\beta_{i+1}-\beta{i})\, E(\sigma)} \\
	& = &  
	\sum_{\sigma \in \Omega} \frac{e^{-\beta_i E(\sigma)}}{Z_i} \, e^{-(\beta_{i+1}-\beta{i})\, E(\sigma)} \\
	& = & 
	\sum_{\sigma \in \Omega} \frac{e^{-\beta_{i+1}\, E(\sigma)}}{Z_i}
	= 
	\frac{Z_{i+1}}{Z_i} = \alpha_i.
\label{expecty}
\end{eqnarray}
Assume now that we have an algorithm for generating states $X_i$ according to $\pi_i$. We draw 
\begin{eqnarray}
	m:=64\ell/\ep^2
\end{eqnarray}
samples of $X_i$ and take the mean $\overline{Y}_i$ of their corresponding estimators $Y_i$.
Then, the mean $\overline{Y}_i$ satisfies
\begin{eqnarray}
	\frac{\VV{\overline{Y}_i}}{\left(\EE{\overline{Y}_i}\right)^2} 
	=
	\frac{\ep^2}{64\ell}\, \frac{\VV{Y_i}}{\left(\EE{Y_i}\right)^2}
	\leq
	\frac{\ep^2}{16 \ell}\,.
\end{eqnarray}
(We have used the assumption $\frac{1}{2}\le\alpha_i\le 1$.)  We can now compose such estimates of $\alpha_i$. 
Define a new random variable $\overline{Y}$ by 
\begin{equation}
	\overline{Y} = \overline{Y}_{\ell-1}\overline{Y}_{\ell-2}\cdots\overline{Y}_0
\end{equation}
Since all $\overline{Y}_i$ are independent, we have
\begin{eqnarray}
	\EE{\overline{Y}}
	=
	\EE{Y_{\ell-1}} \EE{Y_{\ell-2}} \cdots \EE{Y_{0}}
	= 
	\alpha_{\ell-1} \alpha_{\ell-2} \cdots \alpha_{0} = \alpha,\nonumber
\end{eqnarray}
Moreover, $\overline{Y}$ has the property
\begin{eqnarray}
	\frac{\VV{\overline{Y}}}{\left(\EE{\overline{Y}}\right)^2}
	& = &
		\frac{\EE{\overline{Y}^2_{\ell-1}}\cdots\EE{\overline{Y}^2_{0}} -
				\EE{\overline{Y}_{\ell-1}}^2\cdots\EE{\overline{Y}_{0}}^2}		         
		{\EE{\overline{Y}^2_{\ell-1}}^2\cdots\EE{\overline{Y}_{0}}^2}
		\nonumber \\
	& = &
		         \left(
		                 1 +
		                 \frac{\VV{\overline{Y}_{\ell-1}}}{\left(\EE{\overline{Y}_{\ell-1}}\right)^2}
		         \right)
		         \cdots
		         \left(
		                 1 +
		                 \frac{\VV{\overline{Y}_{0}}}{\left(\EE{\overline{Y}_{0}}\right)^2}
		         \right)- 1 \nonumber\\
		         &\le&
		                 \left(e^{\ep^2/16\ell}\right)^\ell - 1 \label{use1}
		                 \nonumber\\ &\le&
		                 \epsilon^2/8\,, \label{use2} \nonumber
\end{eqnarray}
where we used $1+x\le e^x$ (true for all $x$) and $e^x-1\le 2x$ (true for all $x\in[0,1])$ in the last two steps, respectively.
Chebyshev's inequality now implies that the value of
$\overline{Y}$ is in the interval $[(1-\epsilon)\alpha, (1+\epsilon) \alpha]$
with probability at least $\frac{7}{8}$.
		
Of course, we are not able to obtain perfect samples $X_i$ from $\pi_i$.  Assume now that we have $X_i'$ that are from a distribution with a variation distance from $\pi_i$ smaller than 
\begin{eqnarray}
	d:= \epsilon^2/(512\ell^2).
\end{eqnarray}
Let $\overline{Y}'$ be defined as $\overline{Y}$ as above, but instead of $X_i$ we use $X_i'$.  Then, with probability at least $\frac{7}{8}$, we have $\overline{Y}=\overline{Y}'$.  To derive this, observe that the algorithm can be thought to first take a sample from a product probability distribution $\pi$ on the $(m\ell)$-fold direct product of $\Omega$.  We denote the probability distribution in the case of imperfect samples by $\pi'$.  The total variation distance between $\pi$ and $\pi'$ is then bounded from above by 
\begin{equation}
d \cdot m \cdot \ell = \frac{\epsilon^2}{512\ell^2} \cdot \frac{64\ell}{\epsilon^2} \cdot \ell = \frac{1}{8}\,.
\end{equation}
Therefore, $\overline{Y}'$ is in the interval $[(1-\epsilon)\EE{Y},(1+\epsilon)\EE{Y}]$ with probability at least $\frac{3}{4}$.

We obtain the samples $X'_i$ by applying Markov chains $P_i$ whose limiting distributions are equal to $\pi_i$.  
Constructing such rapidly-mixing Markov chains is a hard task, but it has been done for 
the Ising model \cite{Jerrum2} and the Potts model \cite{Vazirani}.
	
	
\section{Quantum Algorithm}

\subsection{Overview}
\label{overviewsection}

The classical FPRAS we described in Section \ref{classicsection} 
consists of 
\begin{enumerate}
	\item preparing many samples from a distribution close to $\pi_i$ 
		by letting a suitable Markov chain mix, 
	\item using these samples to approximate the ratios $\alpha_i$ in \eqref{telescope}, and
	\item composing these estimates of $\alpha_i$ into an estimate of the partition function.
\end{enumerate}	
We build our quantum algorithm on this scheme, with two novel quantum ingredients.
First, instead of letting a Markov chain $P_i$ mix towards its stationary distribution $\pi_i$,
we choose to approximately prepare the state $\ket{\pi_i} = \sum_{\sigma} \sqrt{\pi_i(\sigma)} \ket{\sigma}$,
a coherent encoding of the Boltzmann distribution. 
We use a preparation method \cite{WA:08} based on Grover's $\frac{\pi}{3}$-fixed-point search \cite{Grover},
efficiently driving the state $\ket{\pi_0}$ towards the desired state $|\pi_i\>$ 
through a sequence of intermediate states. 

Second, instead of using classical samples from the distribution $\pi_i$, 
we approximate $\alpha_i$ by phase-estimation of a certain unitary on the state $\ket{\pi_i}$. 
This is a new concept, going beyond our previous work \cite{WA:08}. 
This phase-estimation subroutine can be efficiently (albeit only approximately) 
applied by utilizing quantum walks.

\begin{figure}
\begin{center} 
	\medskip
	\hspace{3cm}
	\Qcircuit @C=1em @R=1em {
		\lstick{|\pi_0\>}                                                                                  
				& \gate{\cM_0} &&& \textrm{(obtain $\alpha_0$)} \\ 
		\lstick{|\pi_0\> \rightarrow |\tilde{\pi}_{1}\>}                                                   
				& \gate{\cM_1} &&& \textrm{(obtain $\alpha_1$)} \\ 
				&              &&& \\
		\lstick{|\pi_0\> \rightarrow |\tilde{\pi}_{1}\> \rightarrow \cdots \rightarrow |\tilde{\pi}_{\ell-1}\>} 
				& \gate{\cM_{\ell-1}} &&& \textrm{\quad(obtain $\alpha_{\ell-1}$)}   
	}
\end{center}
\caption{Structure of the quantum algorithm.}
\label{figstructure}
\end{figure}
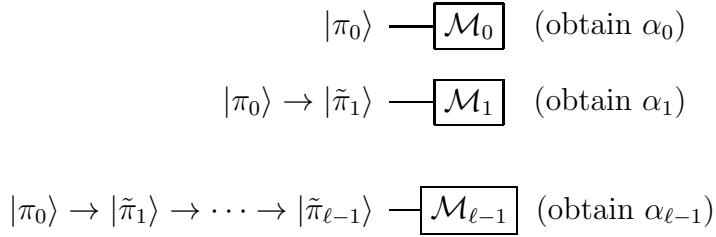

The structure of our algorithm is depicted in Fig. \ref{figstructure}.
It consists of successive approximate preparations of $\ket{\pi_i}$ followed by 
a quantum circuit outputting a good approximation to $\alpha_i$ (with high probability).
Our main result is the construction of a fast quantum version of a class of classical algorithms, 
summed in Theorem~\ref{mainresult}. 

We arrive at our quantum algorithm in two steps.
First, in Section \ref{perfect}, we explain how to quantize the the classical algorithm 
in the perfect case, assuming that we can take perfect samples $X_i$ from $\pi_i$.
Then, in Section \ref{imperfect} we release this assumption and describe the full quantum algorithm.


\subsection{Perfect Case}\label{perfect}

To estimate the ratios $\alpha_i$ in \eqref{telescope}, the classical algorithm generates 
random states $X_i$ from $\pi_i$ and computes the mean $\overline{Y}_i$ of the random variables $Y_i$.
The process of generating a random state $X_i$ from $\pi_i$ is equivalent to preparing the mixed state
\begin{equation}
	\rho_i = \sum_{\sigma\in\Omega} \pi_i(\sigma) |\sigma\>\<\sigma|\,.
\end{equation}
Instead of this, we choose to prepare the pure states
\begin{equation}
	|\pi_i\> = \sum_{\sigma\in\Omega} \sqrt{\pi_i(\sigma)} |\sigma\>\,.
\end{equation}
We call these states {\em quantum samples} since they coherently encode the probability 
distributions $\pi_i$. In this Section, we assume that we can prepare these exactly and efficiently.

The random variable $Y_i$ can be interpreted as the outcome of the measurement of the observable
\begin{equation}
	A_i = \sum_{\sigma\in\Omega} y_i(\sigma) |\sigma\>\<\sigma|
\end{equation}
in the state $\rho_i$, where 
\begin{equation}
y_i(\sigma) = e^{-(\beta_{i+1}-\beta_i) E(\sigma)}\,.
\end{equation}
With this interpretation in mind, we see that to estimate $\alpha_i$ 
classically, we need to estimate the expected value ${\rm Tr}(A_i \rho_i)$ by repeating the above measurement 
several times and outputting the mean of the outcomes.

We now explain how to quantize this process.  We add an ancilla qubit to our quantum system in which the quantum samples $|\pi_i\>$ live.  For each $i=0,\ldots,\ell-1$, we define the unitary
\begin{equation}
V_i = \sum_{\sigma\in\Omega} \ket{\sigma}\bra{\sigma} \otimes
\left(
\begin{array}{cc}
  \sqrt{y_i(\sigma)}   & \sqrt{1-y_i(\sigma)} \\
- \sqrt{1-y_i(\sigma)} & \sqrt{y_i(\sigma)}
\end{array}
\right)\,.
\end{equation}
This $V_i$ can be efficiently implemented, it is a rotation on the extra qubit controlled
by the state of the first tensor component.
Let us label
\begin{equation}
	\ket{\psi_i} = V_i \big( |\pi_i\> \otimes |0\> \big).
\label{psistate}
\end{equation}
Consider now the expected value of the projector
\begin{equation}
	P = \ii \otimes \ket{0}\bra{0}
\label{projector}
\end{equation}
in the state $\ket{\psi_i}$. We find
\begin{equation}
	\<\psi_i|P|\psi_i\> = \<\pi_i|A_i|\pi_i\> = \alpha_i\,.
\end{equation}
We now show how to speed up the process of estimating $\alpha_i$ 
with a method that generalizes quantum counting \cite{BrassardHoyerTapp}. 
As noted in the beginning of this Section, we assume efficient preparation of $\ket{\pi_i}$, which in turn
implies that we can efficiently implement the reflections
\begin{equation}\label{eq:reflectPii}
	R_i = 2|\pi_i\>\<\pi_i| - \ii\,.
\end{equation}
The result of this Section, the existence of a quantum FPRAS for estimating the partition function 
assuming efficient and perfect preparation of $\ket{\pi_i}$, is summed in Theorem \ref{th:perfectZ}:
\begin{theorem}\label{th:perfectZ}
	There is a fully polynomial quantum approximation scheme $\cA$ for the partition function $Z$.  
	Its output $Q$ 	satisfies
	\begin{equation}
		\Pr\big[(1-\epsilon) Z \leq Q \leq (1+\epsilon) Z \big] \geq \frac{3}{4}\,.  
	\end{equation}
	For each $i=0,\ldots,\ell-1$, the scheme $\cA$ uses 
	\begin{equation}
		O\left(\log \ell \right)
	\end{equation}
	perfectly prepared quantum samples $|\pi_i\>$, and applies the controlled-$R_i$ operator
	\begin{equation}
		O\left( \frac{\ell}{\ep} \log \ell \right)
		\end{equation}
		times, where $R_i$ is as in (\ref{eq:reflectPii}).
\end{theorem}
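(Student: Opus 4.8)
The plan is to reduce the estimation of each ratio $\alpha_i = \<\psi_i|P|\psi_i\>$ to a single run of amplitude estimation (the generalization of quantum counting \cite{BrassardHoyerTapp}) carried out at relative precision $\Theta(\ep/\ell)$, and then to multiply the resulting estimates into $Q = Z_0\prod_{i=0}^{\ell-1}\hat\alpha_i$. Write $|\psi_i\> = \sin\theta_i\,|g_i\> + \cos\theta_i\,|b_i\>$ with $\theta_i\in[0,\pi/2]$, where $|g_i\> = P|\psi_i\>/\|P|\psi_i\>\|$ is the normalized component in the ancilla-$|0\>$ subspace and $|b_i\>$ is orthogonal to it in the plane they span; then $\sin^2\theta_i = \alpha_i$. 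The Grover-type operator for amplitude estimation is $G_i = -\big(2|\psi_i\>\<\psi_i| - \ii\big)\big(2P - \ii\big)$, which on $\s\{|g_i\>,|b_i\>\}$ is a rotation by $\pm 2\theta_i$ with eigenvalues $e^{\pm 2i\theta_i}$ and eigenvectors built from $|\psi_i\>$.

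The first step is to check that each application of $G_i$ costs only one controlled-$R_i$. The factor $2P - \ii$ equals $\ii\otimes(2|0\>\<0|-\ii)$, a single-qubit gate on the ancilla. For the other factor, $|\psi_i\> = V_i\big(|\pi_i\>\otimes|0\>\big)$ gives $2|\psi_i\>\<\psi_i| - \ii = V_i\big(2|\pi_i\>\<\pi_i|\otimes|0\>\<0| - \ii\big)V_i^\dagger$, and the reflection $2|\pi_i\>\<\pi_i|\otimes|0\>\<0| - \ii$ about $|\pi_i\>\otimes|0\>$ is, up to one single-qubit operation on the ancilla, exactly the controlled-$R_i$ operator (apply $R_i$ to the first register conditioned on the ancilla being $|0\>$). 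Hence one application of $G_i$ uses one controlled-$R_i$, two applications of $V_i^{\pm 1}$, and $O(1)$ elementary gates, all efficient in the perfect case by hypothesis.

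The second step is to run phase estimation on $G_i$ with input $|\psi_i\>$ and $M = \Theta(\ell/\ep)$ invocations of $G_i$, reading out $\hat\theta_i$ and setting $\hat\alpha_i = \sin^2\hat\theta_i$. The standard amplitude-estimation guarantee yields $|\hat\alpha_i - \alpha_i| \le 2\pi\sqrt{\alpha_i(1-\alpha_i)}/M + \pi^2/M^2$ with probability at least $8/\pi^2$. Here the hypothesis $\tfrac12\le\alpha_i\le 1$ is used twice: $\sqrt{\alpha_i(1-\alpha_i)}\le\tfrac12$ turns this into an $O(1/M)$ additive bound, and $\alpha_i\ge\tfrac12$ upgrades it to a relative bound, so that for a suitable constant in $M = \Theta(\ell/\ep)$ we get $|\hat\alpha_i/\alpha_i - 1|\le \ep/(2\ell)$ with probability $\ge 8/\pi^2$. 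Boosting this to probability $\ge 1 - 1/(8\ell)$ by taking the median of $O(\log\ell)$ independent repetitions — each needing one freshly prepared $|\pi_i\>$ and $M$ applications of controlled-$R_i$ — accounts for the claimed $O(\log\ell)$ quantum samples and $O\big(\tfrac{\ell}{\ep}\log\ell\big)$ controlled-$R_i$ applications for each $i$.

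Finally, a union bound over $i=0,\dots,\ell-1$ gives that with probability at least $7/8\ge\tfrac34$ every relative error obeys $|\hat\alpha_i/\alpha_i - 1|\le\ep/(2\ell)$; on this event $Q/Z = \prod_i(\hat\alpha_i/\alpha_i)$ lies in $[(1-\tfrac{\ep}{2\ell})^\ell,(1+\tfrac{\ep}{2\ell})^\ell]\subseteq[1-\ep,1+\ep]$ by $1+x\le e^x$ and $e^x-1\le 2x$ on $[0,1]$, exactly as in Section~\ref{classicsection}. The point to get right — and where the two speed-ups are genuinely linked — is the error budget: the telescoping product has $\ell$ factors, so each $\alpha_i$ must be estimated to relative precision $\Theta(\ep/\ell)$, and it is precisely the Heisenberg $1/M$ scaling of amplitude estimation, in place of the $1/\sqrt{m}$ scaling of classical sampling, that achieves this with $M = \Theta(\ell/\ep)$ rather than $\Theta(\ell^2/\ep^2)$ invocations. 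I would also double-check the bookkeeping that the object to count is the \emph{controlled} reflection $R_i$, since it is exactly this primitive that will be realized only approximately, via quantum walks, in Section~\ref{imperfect}, where the spectral-gap factor $1/\sqrt{\delta}$ enters.
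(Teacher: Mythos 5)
Your proposal is correct and follows essentially the same route as the paper: amplitude/phase estimation of the Grover-type operator built from the reflection about $|\psi_i\>=V_i(|\pi_i\>\otimes|0\>)$ and the ancilla projector $P$ (one controlled-$R_i$ per Grover step), run to relative precision $\Theta(\ep/\ell)$ per ratio using $\alpha_i\ge\tfrac12$, boosted by a median/powering argument to failure probability $O(1/\ell)$ with $O(\log\ell)$ fresh samples, and composed by a union bound and the telescoping-product estimate. The only differences are cosmetic (the Brassard--H{\o}yer--Tapp error bound and the global sign convention on $G_i$ in place of the paper's Nielsen--Chuang phase-estimation analysis and the powering lemma of \cite{Valiant}), so the resource counts match the statement exactly.
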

	   
To prove Theorem~\ref{th:perfectZ}, we need the following three technical results.

\begin{lemma}[Quantum ratio estimation]\label{lem:generalPE}
Let $\ep_{pe}\in (0,1)$. For each $i=0,\ldots,\ell-1$ there exists a quantum approximation scheme $\cA'_i$ for $\alpha_i$. Its output $Q'_i$ satisfies
\begin{equation}
\Pr\big[(1-\ep_{pe}) \alpha_i \leq Q'_i \leq (1+\ep_{pe}) \alpha_i \big] \geq \frac{7}{8}.
\end{equation}
The scheme $\cA'_i$ requires one copy of the quantum sample $|\pi_i\>$ and invokes the controlled-$R_i$ operator
$O\left( \ep^{-1}_{pe}\right)$ times, where $R_i$ is as in (\ref{eq:reflectPii}).
\end{lemma}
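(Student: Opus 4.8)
The plan is to adapt the quantum counting technique of \cite{BrassardHoyerTapp} to the more general situation where the quantity to be estimated is $\alpha_i = \langle \psi_i | P | \psi_i\rangle$ with $P = \ii\otimes|0\rangle\langle 0|$ and $|\psi_i\rangle = V_i(|\pi_i\rangle\otimes|0\rangle)$, rather than a ratio of integers. First I would set up the two-dimensional invariant subspace: write $|\psi_i\rangle = \sqrt{\alpha_i}\,|g\rangle + \sqrt{1-\alpha_i}\,|b\rangle$, where $|g\rangle$ is the normalized projection of $|\psi_i\rangle$ onto the range of $P$ and $|b\rangle$ onto its kernel. On $\mathrm{span}\{|g\rangle,|b\rangle\}$, the product of the two reflections $U_i := (2P-\ii)\,(2|\psi_i\rangle\langle\psi_i| - \ii)$ acts as a rotation by angle $2\theta_i$, where $\sin^2\theta_i = \alpha_i$. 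Crucially, the reflection $2|\psi_i\rangle\langle\psi_i|-\ii$ equals $V_i R_i' V_i^\dagger$ where $R_i' = 2(|\pi_i\rangle\langle\pi_i|\otimes|0\rangle\langle0|) - \ii$ on the enlarged space; since $V_i$ is efficiently implementable and $R_i'$ is built from the reflection $R_i$ of \eqref{eq:reflectPii} (together with a trivial reflection about $|0\rangle$ on the ancilla), one application of $U_i$ costs $O(1)$ controlled-$R_i$ calls.

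Next I would run phase estimation on the walk operator $U_i$, applied to the eigenvector-containing state $|\psi_i\rangle$, using $O(\log(1/\ep_{pe}))$ ancilla qubits — enough precision to pin down the eigenphase $\pm 2\theta_i$ to additive error $O(\ep_{pe}\cdot\theta_i)$... wait, this is the subtle point. Standard phase estimation gives \emph{additive} error in the phase, i.e.\ $|\tilde\theta_i - \theta_i| \le \kappa$ using $O(1/\kappa)$ applications of $U_i$, hence $O(1/\kappa)$ controlled-$R_i$ calls. To get a \emph{relative} error $\ep_{pe}$ in $\alpha_i = \sin^2\theta_i$, note that when $\alpha_i$ is bounded below by a constant (here $\tfrac12 \le \alpha_i \le 1$, by the FPRAS cooling-schedule assumption), $\theta_i$ is bounded away from $0$, so $\frac{d\alpha_i}{d\theta_i} = \sin 2\theta_i$ and the map $\theta_i \mapsto \sin^2\theta_i$ is bi-Lipschitz on the relevant interval; therefore an additive error $\kappa = \Theta(\ep_{pe})$ in $\theta_i$ translates into a relative error $O(\ep_{pe})$ in $\alpha_i$. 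This is exactly where the constant lower bound on $\alpha_i$ is used, and it is why the scheme needs only $O(\ep_{pe}^{-1})$ — not $O(\ep_{pe}^{-2})$ — applications of controlled-$R_i$.

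Finally I would boost the success probability: a single round of phase estimation with $O(1/\ep_{pe})$ queries already succeeds with constant probability (say $\ge 4/5$), and either choosing the constants in the ancilla register generously or taking $O(1)$ independent repetitions and the median output pushes this above $7/8$, which only affects the query count by a constant factor. Setting $Q_i' := \sin^2\tilde\theta_i$ then gives the claimed guarantee. The main obstacle is the relative-versus-additive-error conversion: one must verify carefully that the constant lower bound $\alpha_i \ge \tfrac12$ survives the imperfections (in this perfect-case lemma it is assumed exactly, but the argument must be robust enough that in Section~\ref{imperfect} a slightly perturbed $\alpha_i$ still lies in a region where $\theta\mapsto\sin^2\theta$ has derivative bounded below). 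A secondary technical point is handling the sign ambiguity $\pm 2\theta_i$ in the eigenphase — harmless here since $\sin^2$ is even in $\theta_i$, so both eigenvalues yield the same estimate of $\alpha_i$.
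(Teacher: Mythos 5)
Your proposal is correct and follows essentially the same route as the paper: phase estimation on the Grover-type product of the reflections about $\ket{\psi_i}$ and about the range of $P$, with additive phase precision $O(\ep_{pe})$ converted into relative precision on $\alpha_i$ via the bound $\alpha_i \ge \tfrac12$, at a cost of $O(\ep_{pe}^{-1})$ controlled-$R_i$ invocations and a single copy of $\ket{\pi_i}$. One minor caveat: the map $\theta \mapsto \sin^2\theta$ is not bi-Lipschitz on the relevant range (its derivative vanishes at $\theta=\pi/2$, i.e.\ at $\alpha_i=1$), but this is harmless because the argument only needs the one-sided bound $|\sin^2\theta - \sin^2\theta'| \le |\theta-\theta'|$ together with $\alpha_i \ge \tfrac12$, which is exactly the paper's step $|\cos\theta - \cos\theta'| \le |\theta - \theta'|$.
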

		
		\begin{proof}
		Let
		\begin{equation}
			G = (2 \ket{\psi_i}\bra{\psi_i} - \ii)\,(2P - \ii)\,.
		\end{equation}
		Define the basis states
		\begin{equation}\label{eqn:basis}
		         \ket{\gamma_1} = \frac{(\ii - P) \ket{\psi_i}}{\sqrt{1-\alpha_i}}\,,
		         \quad\mbox{and} \quad
		         \ket{\gamma_2} = \frac{P \ket{\psi_i}}{\sqrt{\alpha_i}}\,.
		\end{equation}
		Restricted to the plane spanned by $|\gamma_1\>$ and $|\gamma_2\>$, $G$ acts as a rotation
		\begin{eqnarray}
		         G\big|_{\{\ket{\gamma_1},\ket{\gamma_2}\}} =
		         \left(
		                 \begin{array}{rr}
		                     \cos\theta & \sin\theta \\
		                   - \sin\theta & \cos\theta
		        		 \end{array}
		         \right)\,,
		         \label{gunitary}
		\end{eqnarray}
		where $\theta\in [0,\frac{\pi}{2}]$ satisfies
		\begin{eqnarray}
		         \cos\theta  = 2 \alpha_i - 1.
		         \label{pfromtheta}
		\end{eqnarray}
		The eigenvectors and eigenvalues of $G$ are
		\begin{equation}
		         \ket{G_{\pm}} = \frac{1}{\sqrt{2}} \left[\begin{array}{r}1\\ \pm i\end{array}\right]\,, \quad
		         \lambda_{\pm} = e^{\pm i \theta}\,.
		\end{equation}
		We do not have direct access to one of these eigenvectors, as the state
		$\ket{\psi_i}$ is in a superposition of $\ket{G_{+}}$ and $\ket{G_{-}}$. Thus, when we apply the phase estimation
		circuit for the unitary $G$ to the state $\ket{\psi_i}$, we will sometimes obtain an estimate of
		$\theta$, and sometimes an estimate of $2\pi - \theta$.  However, this is not a problem since both 
		$\theta$ and $2\pi-\theta$ plugged into \eqref{pfromtheta} yield the same result for $\alpha_i$.
		
		We require that the estimate $\theta'$ satisfies
		\begin{eqnarray}
		  \left| \theta' - \theta \right| \leq 2 \ep_{pe} \, \alpha_i \leq \ep_{pe} \label{thetaprecision}
		\end{eqnarray}
		with probability at least $\frac{7}{8}$.  Using the phase estimation circuit in \cite{NielsenChuang}, this means that 
		$\frac{\theta'}{2 \pi}$ has to be an $n_a = \log \frac{2\pi}{\ep_{pe}}$ bit approximation of 
		the phase and the failure probability $p_f$ has to be less than $\frac{1}{8}$. 
		To achieve this, it suffices to use a phase estimation circuit (see Fig.~\ref{figphase}) with
		\begin{equation}
		  t = \log \frac{2\pi}{\ep_{pe}} + \log \left( 2 + \frac{1}{2\,p_f}\right) 
		    =  O\left(\log \ep_{pe}^{-1}\right) \nonumber
		\end{equation}
		ancilla qubits.  This circuit invokes the controlled-$G$ operation
		$O(2^t) = O\left( \ep^{-1}_{pe} \right)$ times. 
		\begin{figure}
		\begin{center}
		         \hspace{1cm}
		         \Qcircuit @C=0.6em @R=0.6em {
		         \lstick{|0\>}    & \qw & \gate{H} & \qw            & \qw             &
		\qw &
		                         & & \qw & \ctrl{4} & \qw      & \multigate{3}{{\rm
		DFT}^\dagger} & \qw                                      \\
		                          &     &          & \vdots         &                 &
		& \cdots
		      & &     &                     &     &                            &
		\\
		         \lstick{|0\>}    & \qw & \gate{H} & \qw            & \ctrl{2}        &
		\qw &
		                         & & \qw & \qw                 & \qw & \ghost{{\rm
		DFT}^\dagger}  & \qw                                     \\
		         \lstick{|0\>}    & \qw & \gate{H} & \ctrl{1}       & \qw             &
		\qw &
		                         & & \qw & \qw                 & \qw & \ghost{{\rm
		DFT}^\dagger}  & \qw                                     \\
		         \lstick{|\psi\>} & \qw & \qw      & \gate{G^{2^0}} & \gate{G^{2^1}}  &
		\qw &
		                         & & \qw & \gate{G^{2^{t-1}}}  & \qw & \qw
		& \qw
		         }
		\end{center}
		\caption{A basic phase estimation circuit with $t$ ancilla qubits}
		\label{figphase}
		\end{figure}
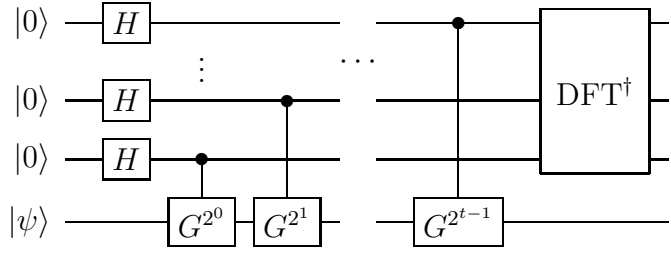		
		
		Let $\alpha'_i$ denote the value we compute from the estimate $\theta'$.  We have
		\begin{equation}
		   \left| \alpha_i - \alpha'_i \right| = \frac{1}{2} \left| \cos\theta - \cos\theta' \right| \leq
		   \frac{1}{2} \left| \theta - \theta' \right| \leq \ep_{pe} \, \alpha_i \,,
		\end{equation}
		showing that the estimate $\alpha'_i$ is within $\pm\epsilon_{pe} \alpha_i$ of the exact value $\alpha_i$ with probability at least $\frac{7}{8}$. 
		This completes the proof that the random variable $Q'_i$ corresponding to the output satisfies the desired properties on estimation accuracy and success probability.
		\end{proof}

		We can boost the success probability of the above quantum approximation scheme for the ratio $\alpha_i$ 
		by applying the \emph{powering lemma} from
		\cite{Valiant}, which we state here for completeness:
		
\begin{lemma}[Powering lemma for approximation schemes]\label{lem:powering}
Let $\cB'$ be a (classical or quantum) approximation scheme whose estimate $W'$ is within $\pm\ep_{pe} q$ to some value $q$ with probability $\frac{1}{2}+\Omega(1)$.  Then, there is an approximation scheme $\cB$ whose estimate $W$ satisfies
\begin{equation}
\Pr\big[(1-\ep_{pe}) q \leq W \leq (1+\ep_{pe}) q \big] \geq 1- \delta_{boost}\,.
\end{equation}
It invokes the scheme $\cB'$ as a subroutine $O\left(\log \delta_{boost}^{-1} \right)$ times.
\end{lemma}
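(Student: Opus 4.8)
The plan is to boost the success probability by the standard \emph{median trick} of \cite{Valiant}: run the given scheme $\cB'$ several times independently and let $\cB$ output the median of the resulting estimates.

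Concretely, suppose $\cB'$ succeeds with probability $p \ge \tfrac12 + c$, where $c>0$ is the fixed constant hidden in the $\Omega(1)$; that is, a single run of $\cB'$ returns an estimate lying in the closed \emph{target interval} $I := [(1-\ep_{pe})q,\,(1+\ep_{pe})q]$ with probability at least $p$. Fix an odd integer $t$ (to be chosen below), run $\cB'$ independently $t$ times to obtain estimates $W'_1,\dots,W'_t$, and set $W$ to be their median.

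The key observation is that $W \notin I$ forces a majority of the runs to miss $I$: if $W$ exceeds $(1+\ep_{pe})q$ then at least $(t+1)/2$ of the $W'_j$ do, and symmetrically if $W$ is below $(1-\ep_{pe})q$, so in either case more than half the runs land outside $I$. Hence, letting $B$ count the runs whose estimate lies outside $I$, we get $\Pr[W \notin I] \le \Pr[B \ge t/2]$. The runs are independent and each lands outside $I$ with probability at most $\tfrac12 - c$, so $\E[B] \le (\tfrac12 - c)t$ and Hoeffding's inequality gives $\Pr[B \ge t/2] \le \Pr\big[B - \E[B] \ge ct\big] \le e^{-2c^2 t}$.

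It then suffices to take $t = \lceil \tfrac{1}{2c^2}\ln \delta_{boost}^{-1}\rceil$ (increased by one if needed to make it odd); this yields $\Pr[W \notin I] \le \delta_{boost}$ while calling $\cB'$ exactly $t = O(\log \delta_{boost}^{-1})$ times, as claimed. There is no genuine difficulty in the argument; the only point to watch is that ``failure'' of a single run must be interpreted two-sidedly (estimate too large \emph{or} too small), and one checks that a failed median still forces a one-sided majority of failed runs, which is all that the bound $\Pr[W \notin I] \le \Pr[B \ge t/2]$ requires.
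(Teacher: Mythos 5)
Your proof is correct: it is the standard median trick, and the only subtle point (that a bad median forces a one-sided majority of bad runs, so two-sided failure still reduces to a Binomial tail) is handled properly, with the Hoeffding bound giving $t=O(\log \delta_{boost}^{-1})$. Note that the paper itself does not prove this lemma at all --- it is stated "for completeness" and attributed to the cited reference of Jerrum, Valiant and Vazirani --- and your argument is precisely the proof underlying that citation, so there is nothing to reconcile with the paper's text.
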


With the help of Lemma \ref{lem:powering}, we now have the constituents required to compose the individual 
estimates of $\alpha_i$ into an approximation for the partition function \eqref{telescope}.
		
\begin{lemma}\label{lem:boundProdEstimates}
Let $\epsilon>0$. Assume we have approximation schemes $\cA_0,\cA_1,\ldots,\cA_{\ell-1}$ such that their estimates $Q_0,Q_1,\ldots,Q_{\ell-1}$ satisfy
\begin{equation}
\Pr\left[ \Big(1-\frac{\ep}{2\ell}\Big) \alpha_i \le Q_i \le \Big(1+\frac{\ep}{2\ell}\Big) \alpha_i \right] \ge 1 - \frac{1}{4\ell}\,.
\end{equation}
Then, there is a simple approximation scheme $\cA$ for the product $\alpha=\alpha_0 \alpha_1 \cdots \alpha_{\ell-1}$. 
The result $Q=Q_0 Q_1 \cdots Q_{\ell-1}$ satisfies
\begin{equation}\label{eq:multApprox}
\Pr\big[ (1-\ep) \alpha \le Q \le (1+\ep) \alpha \big] \ge \frac{3}{4}\,.
\end{equation}
\end{lemma}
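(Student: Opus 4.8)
The plan is to combine a union bound over the $\ell$ failure events with two elementary inequalities governing how multiplicative errors accumulate. First I would introduce the ``good event'' $\mathcal{G} = \bigcap_{i=0}^{\ell-1}\mathcal{G}_i$, where $\mathcal{G}_i$ denotes the event $\big(1-\tfrac{\ep}{2\ell}\big)\alpha_i \le Q_i \le \big(1+\tfrac{\ep}{2\ell}\big)\alpha_i$. By hypothesis each $\mathcal{G}_i$ fails with probability at most $\tfrac{1}{4\ell}$, so a union bound over the $\ell$ events gives $\Pr[\mathcal{G}] \ge 1 - \ell\cdot\tfrac{1}{4\ell} = \tfrac34$. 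Note that no independence of the $Q_i$ is needed for this step, in contrast to the classical Chebyshev argument of Section \ref{classicsection}.

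Next I would show that on the event $\mathcal{G}$ the product estimate $Q = Q_0 Q_1 \cdots Q_{\ell-1}$ already lies in the target window deterministically. On $\mathcal{G}$ every $\alpha_i$ and $Q_i$ is positive, so multiplying the $\ell$ two-sided bounds yields
\[
\Big(1-\frac{\ep}{2\ell}\Big)^{\ell}\,\alpha \;\le\; Q \;\le\; \Big(1+\frac{\ep}{2\ell}\Big)^{\ell}\,\alpha\,,
\]
with $\alpha = \alpha_0\cdots\alpha_{\ell-1}$. It then remains to verify $\big(1-\tfrac{\ep}{2\ell}\big)^{\ell} \ge 1-\ep$ and $\big(1+\tfrac{\ep}{2\ell}\big)^{\ell} \le 1+\ep$. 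For the lower bound, Bernoulli's inequality $(1-x)^{\ell}\ge 1-\ell x$ with $x=\tfrac{\ep}{2\ell}$ gives $1-\tfrac{\ep}{2} \ge 1-\ep$. For the upper bound, $1+x\le e^{x}$ gives $\big(1+\tfrac{\ep}{2\ell}\big)^{\ell} \le e^{\ep/2}$, and then $e^{y}\le 1+2y$ for $y\in[0,1]$ (here $y=\ep/2\le\tfrac12$, using the standing assumption $\ep\in(0,1]$ of an FPRAS) yields $e^{\ep/2}\le 1+\ep$. Combining this with $\Pr[\mathcal{G}]\ge\tfrac34$ establishes \eqref{eq:multApprox}.

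There is no genuine difficulty here; the only point requiring care is that the per-factor parameters are calibrated so that nothing is wasted. The $\tfrac{1}{4\ell}$ failure probability is exactly what the union bound absorbs into $\tfrac14$, and the $\tfrac{\ep}{2\ell}$ relative error is exactly what survives raising to the $\ell$-th power, the extra factor of $2$ being consumed by the estimates $1+x\le e^x$ and $e^{y}\le 1+2y$ (respectively by Bernoulli) on the two sides. Since the lemma only concerns composition of estimates, it is agnostic to the nature of the $\cA_i$; in the application to Theorem~\ref{th:perfectZ} one takes $\ep_{pe}=\tfrac{\ep}{2\ell}$ in Lemma~\ref{lem:generalPE} and $\delta_{boost}=\tfrac{1}{4\ell}$ in Lemma~\ref{lem:powering} to meet the hypotheses.
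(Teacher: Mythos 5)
Your proof is correct and takes essentially the same route as the paper's: a union bound over the $\ell$ events gives the $\tfrac34$ success probability, and on the good event the deviation of $Q$ is controlled by $\bigl(1+\tfrac{\ep}{2\ell}\bigr)^{\ell}\le e^{\ep/2}\le 1+\ep$ on one side and a Bernoulli-type bound on the other (the paper phrases the lower bound via $\bigl|\prod_i x_i-\prod_i y_i\bigr|\le\sum_i|x_i-y_i|$, which is equivalent to your use of $(1-x)^\ell\ge 1-\ell x$). The only point worth noting is the implicit restriction $\ep\le 1$ (or at least $\ep\le 2$) needed for $e^{\ep/2}\le 1+\ep$, which you state explicitly and the paper leaves implicit.
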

		
		\begin{proof} 
		For each $i = 0,\dots,\ell-1$, the failure probability for estimating $\alpha_i$ is smaller than $1/(4\ell)$.
		The union bound implies that the overall failure probability is smaller than $1/4$,
		proving the lower bound $\frac{3}{4}$ on the success probability in \eqref{eq:multApprox}.
		
		To obtain the upper bound on the deviation, we now assume that each $Q_i$ takes
		the upper bound value. We have
		\begin{eqnarray}
		\frac{Q - \alpha}{\alpha}
		& \leq &
		\prod_{i=0}^{\ell-1} \left(1 + \frac{\ep}{2\ell}\right) - 1
		=
		\left(1 + \frac{\ep}{2 \ell}\right)^{\ell} - 1 \nonumber \\
		& \le &
		e^{\epsilon/2} - 1 \le \epsilon\,, \nonumber 
		\end{eqnarray}
		where we have used $1+x\le e^x\le 1 + 2x$, which is true for all $x\in [0,1]$. Thus, in the case of success, we have $Q\leq(1+\ep) \alpha$.
		
		To obtain the lower bound on the deviation, we assume that each $Q_i$ takes its
		lower bound value. We have
		\begin{equation}
		\frac{\alpha - Q}{\alpha}
		\leq
		1 - \prod_{i=0}^{\ell-1} \left(1 - \frac{\ep}{2\ell}\right)
		\leq
		\sum_{i=0}^{\ell-1} \frac{\ep}{2\ell} \le \epsilon\,,
		\end{equation}
		where we have used $\left|\prod_i x_i - \prod_i y_i\right| \leq
		\sum_{i}\left|x_i - y_i\right|$,
		true for arbitrary $x_i,y_i\in [0,1]$. Thus, in the case of success, we have $(1-\ep)\alpha\leq Q$.
		\end{proof}

We are now ready to prove Theorem \ref{th:perfectZ}:
\begin{proof}[Proof of Theorem \ref{th:perfectZ}]

For each $i=0,\ldots,\ell-1$, we can apply Lemma~\ref{lem:generalPE} with the state $|\psi_i\>$ (\ref{psistate}) 
and the projector $P$ (\ref{projector}). This gives us a quantum approximation scheme for $\alpha_i$.
Note that to prepare $|\psi_i\>$, it suffices to prepare $|\pi_i\>$ once. Also, to realize a controlled reflection 
around $|\psi_i\>$, it suffices to invoke the controlled reflection around $|\pi_i\>$ once. 

We now use the reflection $2|\psi_i\>\<\psi_i|-\ii$ and set $\epsilon_{pe}=\epsilon/(2\ell)$ in Lemma~\ref{lem:generalPE}.
With these settings, we can apply Lemma \ref{lem:powering} to the resulting approximation scheme 
for $\alpha_i$ with $\delta_{boost}=1/(4\ell)$. This gives us approximation schemes $\cA_i$ outputting $Q_i$ with 
high precision and probability of success that can be used in Lemma~\ref{lem:boundProdEstimates}. 
The composite result $Q=Q_0 \cdots Q_{\ell-1}$ is thus an approximation 
for $\alpha=\alpha_0 \cdots \alpha_{\ell-1}$ 
with the property
\begin{eqnarray}
	\Pr\big[ (1-\ep) \alpha \le Q \le (1+\ep) \alpha \big] \ge \frac{3}{4}\,.
\end{eqnarray}
Finally, we obtain the estimate for $Z$ by multiplying $Q$ with $Z_0$.
Let us summarize the costs from Lemmas \ref{lem:generalPE}-\ref{lem:boundProdEstimates}.
For each $i=0,\dots,\ell-1$, this scheme uses $\log \delta_{boost}^{-1} = O(\log \ell)$ copies of the state $\ket{\pi_i}$, and invokes $\left(\log \delta_{boost}^{-1}\right) \ep_{pe}^{-1} = O\left(\frac{\ell}{\ep} \log \ell\right)$
reflections around $\ket{\pi_i}$.
\end{proof}		
		
		
\subsection{Quantum FPRAS}\label{imperfect}
		
In the previous Section we have assumed that we can prepare the quantum samples $|\pi_i\>$ 
and implement the controlled reflections $R_i=2|\pi_i\>\<\pi_i|-\ii$ about these states 
perfectly and efficiently. We now release these assumptions and show how to approximately accomplish both tasks 
with the help of quantum walks operators.  We then show that the errors arising from these approximate procedures 
do not significantly decrease the success probability of the algorithm.
This will wrap up the proof of our main result, Theorem~\ref{mainresult}.

In \cite{WA:08}, two of us how to approximately prepare quantum samples $|\pi_i\>$ 
of stationary distributions of slowly-varying Markov chains. Using the fact that the consecutive
states $|\pi_i\>$ and $|\pi_{i+1}\>$ are close, we utilize Grover's $\frac{\pi}{3}$ fixed-point search \cite{Grover} to drive the starting state $\ket{\pi_0}$ towards the desired state $|\pi_i\>$ through multiple intermediate steps. Moreover, to be able to perform this kind of Grover search, we have to be able to apply selective phase shifts of the form $S_i = \omega |\pi_i\>\<\pi_i| + (\ii - |\pi_i\>\<\pi_i|)$ for $\omega=e^{i\pi/3}$ and $\omega=e^{-i\pi/3}$. 
This is another assumption of Section \ref{perfect} that we have to drop here. Nevertheless, an efficient way to apply these phase shifts  approximately, based on quantum walks and phase estimation, exists \cite{WA:08}. 

Our task is to show that the approximation scheme from Lemma~\ref{lem:generalPE} 
works even with approximate input states and using only approximate reflections 
about the states $\ket{\pi_i}$. Let us start with addressing the approximate state preparation.
To be able to use the results of \cite{WA:08}, we first have to establish an important condition.
For their method to be efficient, the overlap of two consecutive quantum samples $|\pi_i\>$ and $|\pi_{i+1}\>$ has to be large. This is satisfied when $\alpha_i=Z_{i+1}/Z_i$ is bounded from below by $\frac{1}{2}$, since
\begin{eqnarray}
	|\<\pi_i|\pi_{i+1}\>|^2
	& = &
	\left|
	\sum_{\sigma\in\Omega} \frac{\sqrt{e^{-\beta_i E(\sigma)} \, 
	e^{-\beta_{i+1} E(\sigma)}}}{\sqrt{Z_i \, Z_{i+1}}} \right|^2 \nonumber \\
	& \ge &
	\left|
	\frac{\sum_{\sigma\in\Omega} \, e^{-\beta_{i+1}E(\sigma)}}{\sqrt{2 Z_{i+1}} \, \sqrt{Z_{i+1}}} \right|^2 
	= 
	\frac{1}{2}\,. \nonumber
\end{eqnarray}
The following lemma then directly follows from the arguments used in \cite[Theorem~2]{WA:08}.
\begin{lemma}\label{samplinglemma}
For $\ep_S>0$ arbitrary and each $i=1,\ldots,\ell-1$, there is a quantum method preparing a state $|\tilde{\pi}_i\>$ with
\begin{eqnarray}
\| |\tilde{\pi}_i\> - |\pi_i\>|0\>^{\otimes a} \| \leq {\ep_S} \,,
\end{eqnarray}
where $a=O\left(\frac{\ell}{\ep_S \sqrt{\delta}}\right)$ is the number of ancilla qubits.  The method invokes a controlled version of a walk operator from the set $\{W(P_1), \ldots, W(P_{\ell-1})\}$ 
\begin{eqnarray}
O\left( \frac{\ell}{\sqrt{\delta}} \log^2 \frac{\ell}{\ep_S} \right)\,.
\end{eqnarray}
times.
\end{lemma}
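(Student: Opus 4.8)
The plan is to reduce the statement to the construction in \cite[Theorem~2]{WA:08}, after verifying its one genuinely non-trivial hypothesis. That construction prepares $\ket{\pi_i}$ by starting from $\ket{\pi_0}$ — which is trivial to prepare, since at $T_0=\infty$ the distribution $\pi_0$ is uniform on $\Omega$ — and then, for $j=0,1,\ldots,i-1$, rotating the running state from an approximation of $\ket{\pi_j}$ toward $\ket{\pi_{j+1}}$ by Grover's $\tfrac{\pi}{3}$ fixed-point search \cite{Grover}. After $m$ levels of recursion that search drives the current state to within infidelity $\eta_0^{3^m}$ of the target, while each level triples the number of selective phase shifts performed about the source and about the target state. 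Here $\eta_0$ is the initial infidelity, which is at most $\tfrac12$ by the overlap bound $|\braket{\pi_j}{\pi_{j+1}}|^2\ge\tfrac12$ established just above the lemma (and which stays a constant bounded away from $1$ once the previous state is prepared accurately enough). Hence driving one transition to infidelity $\eta$ costs only $O(\log(1/\eta))$ selective phase shifts about $\ket{\pi_j}$ and $\ket{\pi_{j+1}}$, i.e.\ $O(\log(\ell/\ep_S))$ of them once we require (see below) $\eta=\mathrm{poly}(\ep_S/\ell)$.

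The selective phase shifts $S(\ket{\pi_j},\omega)=\omega\ket{\pi_j}\bra{\pi_j}+(\ii-\ket{\pi_j}\bra{\pi_j})$, with $\omega=e^{\pm i\pi/3}$, are not available exactly; following \cite{WA:08} we approximate them uniformly in $j$ using the walk operator $W(P_j)$ together with phase estimation. The enabling fact is that $\ket{\pi_j}$ is the unique eigenstate of $W(P_j)$ with eigenphase $0$, and every other eigenphase is bounded away from $0$ by the phase gap $\Omega(\sqrt{\delta})$ — this is Szegedy's quadratic spectral-to-phase-gap relation reviewed in Appendix~\ref{walksection}. Running phase estimation on $W(P_j)$ to resolution finer than this gap, multiplying by $\omega$ conditioned on the estimate being essentially zero, and then uncomputing the estimate register produces a unitary $\tilde S$ with $\norm{\tilde S-S(\ket{\pi_j},\omega)}$ as small as desired; making this error at most $\eta'$ costs $O\!\left(\tfrac{1}{\sqrt{\delta}}\log(1/\eta')\right)$ controlled applications of $W(P_j)$ — the factor $1/\sqrt{\delta}$ from the gap, the factor $\log(1/\eta')$ from boosting the phase-estimation success probability by repetition — together with a phase-estimation register whose contribution, summed over all phase shifts of the procedure, accounts for the $a=O\!\left(\tfrac{\ell}{\ep_S\sqrt{\delta}}\right)$ ancilla qubits (the precise bookkeeping is the one in \cite{WA:08}).

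Now we fix the error budget and collect the counts. Allotting total error $\ep_S/\ell$ to each of the $\ell$ transitions and, inside a transition, dividing it over the $O(\log(\ell/\ep_S))$ phase shifts of the fixed-point search forces $\eta'=\mathrm{poly}(\ep_S/\ell)$ per phase shift, so one phase shift costs $O\!\left(\tfrac{1}{\sqrt{\delta}}\log(\ell/\ep_S)\right)$ walk steps. Multiplying the three factors — $\ell$ transitions, $O(\log(\ell/\ep_S))$ phase shifts per transition, $O\!\left(\tfrac{1}{\sqrt{\delta}}\log(\ell/\ep_S)\right)$ walk steps per phase shift — gives the claimed $O\!\left(\tfrac{\ell}{\sqrt{\delta}}\log^2\tfrac{\ell}{\ep_S}\right)$ applications of a controlled walk operator from $\{W(P_1),\ldots,W(P_{\ell-1})\}$, and by construction $\norm{\ket{\tilde\pi_i}-\ket{\pi_i}\ket{0}^{\otimes a}}\le\ep_S$.

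The step requiring real care — and exactly the one that \cite[Theorem~2]{WA:08} carries out — is the \emph{stability} of the whole scheme under the approximations: one must show that replacing each ideal selective phase shift inside the $\tfrac{\pi}{3}$ recursion by an $\eta'$-close unitary degrades the final fidelity by only $O(\eta')$ per phase shift, so that the recursion does not amplify these perturbations, and that the per-transition deviations combine subadditively over the $\ell$ stages so the global deviation stays $\le\ep_S$. Given those estimates from \cite{WA:08}, the present lemma follows by instantiating them with $\eta_0\le\tfrac12$ and with $\ep_S$, $\ell$, $\delta$ as in the statement.
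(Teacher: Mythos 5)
Your proposal takes essentially the same route as the paper: the paper's entire argument for this lemma is to verify the slowly-varying condition $|\langle\pi_i|\pi_{i+1}\rangle|^2\ge\tfrac12$ (via $\alpha_i\ge\tfrac12$) and then invoke the construction of \cite[Theorem~2]{WA:08}, which is exactly your reduction. Your additional unpacking of the $\tfrac{\pi}{3}$ fixed-point search and of the walk-plus-phase-estimation implementation of the selective phase shifts is consistent with that reference (and with the cost structure in Lemma~\ref{reflectlemma}), so the proposal is correct.
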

		
We choose the preparation method from Lemma \ref{samplinglemma} with $\ep_S = \frac{1}{32}$. 
The cost for this precision $\ep_S$ is
\begin{equation}
	O\left( \frac{\ell}{\sqrt{\delta}} \log^2 \ell \right)\,
\label{samplecost}
\end{equation}
applications of the quantum walk operator.
Recall that when we used Lemma~\ref{lem:generalPE} in Section \ref{perfect} with the state $\ket{\psi_i}$ (coming from the perfect quantum sample $\ket{\pi_i}$) as input, the success probability of the resulting scheme was
greater than $\frac{7}{8}$. We now use the method given in Lemma~\ref{lem:generalPE} on the approximate input $|\tilde{\psi}_i\>= V_i (|\tilde{\pi}_i\> \otimes |0\>)$. With our chosen precision for preparing $|\tilde{\pi}_i\>$, the success probability of the approximation scheme of Lemma~\ref{lem:generalPE} cannot decrease by more than $2\cdot \frac{1}{32}$.

	
The second assumption of Lemma~\ref{lem:generalPE} we need to drop is the ability to perfectly implement the reflections $R_i=2|\pi_i\>\<\pi_i|-\ii$. 
We now show how to approximately implement these reflections. The following lemma follows directly from the arguments in \cite[Lemma~2 and Corollary~2]{WA:08}.
\begin{lemma}\label{reflectlemma}
For $\epsilon_R>0$ arbitrary and each $i=1,\ldots,\ell-1$, there is an approximate reflection $\tilde{R}_i$ such that 
\begin{equation}
\tilde{R}_i \Big(  |\varphi\> \otimes |0\>^{\otimes b}\Big) = \Big(R_i|\varphi\>\Big) \otimes |0\>^{\otimes b} + |\xi\>
\end{equation}
where $|\varphi\>$ is an arbitrary state, $b=O\left( \log \epsilon_R^{-1} \,\log \frac{1}{\sqrt{\delta}}\right)$ is the number of ancilla qubits, and $|\xi\>$ is some error vector with $\||\xi\>\|\le \epsilon_R$.
It invokes the controlled version of a walk operator from $\{W(P_1),\ldots,W(P_{\ell-1})\}$ 
\begin{equation}
O\left(\frac{1}{\sqrt{\delta}} \log\frac{1}{\epsilon_R}\right)
\end{equation}
times.
\end{lemma}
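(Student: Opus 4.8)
The plan is to reduce the implementation of the reflection $R_i = 2|\pi_i\>\<\pi_i| - \ii$ to two ingredients already available from \cite{WA:08}: a circuit $U_i$ that maps $|\pi_0\>|0\>^{\otimes a}$ to (a good approximation of) $|\pi_i\>|0\>^{\otimes a}$, and the ability to approximately reflect about the starting state $|\pi_0\>$, which is a trivial fixed reference state. The basic identity I would use is $R_i = U_i\,\big(2|\pi_0\>\<\pi_0|\otimes|0\dots0\>\<0\dots0| - \ii\big)\,U_i^\dagger$ up to the error incurred because $U_i$ only approximately performs the intended map. Since the initial state $|\pi_0\>$ is uniform (at $T_0=\infty$), reflecting about $|\pi_0\>|0\>^{\otimes a}$ is a standard Grover-type diffusion and costs nothing in walk invocations; all the cost sits in $U_i$ and $U_i^\dagger$. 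So the first step is to write down this conjugation formula and record that, by Lemma~\ref{samplinglemma} applied with an appropriately chosen internal precision $\ep_S'$, each of $U_i$ and $U_i^\dagger$ costs $O\big(\frac{\ell}{\sqrt\delta}\log^2\frac{\ell}{\ep_S'}\big)$ walk applications.

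The second step is the error bookkeeping. If $U_i$ satisfies $\|U_i(|\pi_0\>|0\>^{\otimes a}) - |\pi_i\>|0\>^{\otimes a}\| \le \ep_S'$, then a standard sandwich argument (triangle inequality applied to $\|U_i R_0' U_i^\dagger - \tilde R_i\|$-type quantities, using unitarity of $U_i$ and that reflections have operator norm $1$) shows that the effective reflection $\tilde R_i := U_i R_0' U_i^\dagger$ acts as $R_i$ on the first register up to an error vector of norm $O(\ep_S')$ when the ancillas start in $|0\>^{\otimes b}$. One then sets $\ep_S' = \Theta(\ep_R)$, which makes the error vector $|\xi\>$ have norm $\le \ep_R$ as claimed, and substitutes back into the cost bound to get $O\big(\frac{\ell}{\sqrt\delta}\log^2\frac{\ell}{\ep_R}\big)$ — however, the lemma statement claims the sharper $O\big(\frac{1}{\sqrt\delta}\log\frac{1}{\ep_R}\big)$, with the ancilla count $b=O(\log\ep_R^{-1}\log\tfrac1{\sqrt\delta})$ having no $\ell$ dependence, so this naive route is too lossy. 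The actual proof must instead invoke the more refined construction of \cite[Lemma~2 and Corollary~2]{WA:08}, where the reflection about $|\pi_i\>$ is built directly via phase estimation on the relevant quantum walk operator $W(P_i)$ (using the quadratic gap relation so that the phase gap is $\Omega(\sqrt\delta)$), rather than through the full fixed-point-search preparation sequence; this is what removes the $\ell$ and the extra logarithmic factor.

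Hence the key steps, in order, are: (i) recall from \cite{WA:08} the phase-estimation-based construction that, given the quantum walk $W(P_i)$ whose unique eigenvalue-$1$ eigenvector is $|\pi_i\>$ and whose phase gap is $\Omega(\sqrt\delta)$, implements an approximate reflection about $|\pi_i\>$; (ii) quantify its error as a function of the number $t$ of phase-estimation ancillas and the number of controlled-$W(P_i)$ calls, obtaining error $\le \ep_R$ once $t = O(\log\ep_R^{-1})$ and the number of walk calls is $O(2^t/\sqrt\delta\cdot\mathrm{poly}) = O(\tfrac{1}{\sqrt\delta}\log\tfrac1{\ep_R})$; (iii) read off the ancilla count $b$ from the phase register size plus the ancillas internal to one application of $W(P_i)$, giving $b=O(\log\ep_R^{-1}\log\tfrac1{\sqrt\delta})$; (iv) verify the stated action $\tilde R_i(|\varphi\>\otimes|0\>^{\otimes b}) = (R_i|\varphi\>)\otimes|0\>^{\otimes b} + |\xi\>$ with $\||\xi\>\|\le\ep_R$ by linearity in $|\varphi\>$ and the eigenbasis decomposition of $|\varphi\>$ with respect to $W(P_i)$. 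The main obstacle I anticipate is exactly step (ii): controlling the approximate-reflection error of phase estimation uniformly over all input states $|\varphi\>$ — in particular handling eigencomponents of $W(P_i)$ whose phase lies just outside the resolvable window — which is where the logarithmic overhead in $t$ and the dependence on $\sqrt\delta$ (not $\delta$) must be argued carefully; everything else is routine conjugation and triangle-inequality bookkeeping that can be imported wholesale from \cite{WA:08}.
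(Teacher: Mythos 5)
Your final route coincides with the paper's: the paper gives no self-contained argument for this lemma at all, it simply imports it from \cite[Lemma~2 and Corollary~2]{WA:08}, i.e.\ the phase-estimation-on-$W(P_i)$ construction that you correctly identify in steps (i)--(iv) after (rightly) discarding the conjugation-by-$U_i$ idea as too lossy. So at the level of ``which result does the work,'' you and the paper agree.

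However, the part you flag as the anticipated obstacle is exactly where your sketch goes wrong, and as written it would not reproduce the claimed bounds. You allocate $t=O(\log \ep_R^{-1})$ bits to a \emph{single} phase-estimation register; a single round with $2^t$ controlled-walk calls suppresses the reflection error on an eigenvector with eigenphase $\theta\geq 2\sqrt{\delta}$ only to roughly $1/(2^t\theta)$, so to reach error $\ep_R$ in one round you would need $2^t=\Theta\bigl(1/(\ep_R\sqrt{\delta})\bigr)$ walk calls --- the lossy $1/(\ep_R\sqrt{\delta})$ scaling, not $O\bigl(\tfrac{1}{\sqrt{\delta}}\log\tfrac{1}{\ep_R}\bigr)$; and your cost formula ``$O(2^t/\sqrt{\delta}\cdot\mathrm{poly})$'' does not parse into the stated bound either. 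The mechanism in \cite{WA:08} (as in Magniez--Nayak--Roland--Santha) is different: each phase register has only $t=O\bigl(\log\tfrac{1}{\sqrt{\delta}}\bigr)$ qubits, just enough to resolve the phase gap $2\sqrt{\delta}$ at constant error amplitude and costing $O(1/\sqrt{\delta})$ walk calls, and one \emph{repeats} the phase estimation into $O(\log\ep_R^{-1})$ independent registers, applying the $-1$ phase unless all registers read zero before uncomputing; the error amplitude then decays geometrically in the number of repetitions. This is what simultaneously yields the walk-call count $O\bigl(\tfrac{1}{\sqrt{\delta}}\log\tfrac{1}{\ep_R}\bigr)$ and the ancilla count $b=O\bigl(\log\ep_R^{-1}\,\log\tfrac{1}{\sqrt{\delta}}\bigr)$ as the product (repetitions)$\times$(register size) --- not, as you state in (iii), the phase-register size plus workspace of one $W(P_i)$ application. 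Since the amplification-by-repetition idea is the whole content of \cite[Lemma~2 and Corollary~2]{WA:08}, your proposal is fine as an appeal to that reference but does not stand on its own as a proof.
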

		
Recall that in Lemma~\ref{lem:generalPE}, the controlled reflection $R_i$ is invoked $O(1/\epsilon_{pe})$ times.  We now run this approximation scheme with $\tilde{R}_i$ instead of $R_i$. The norm of the accumulated error vector is
\begin{equation}
	O\left(\frac{1}{\ep_{pe}}\right) \cdot \ep_R.
\end{equation}
We choose
\begin{equation}
	\ep_R = \Omega(\ep_{pe})
\end{equation}
to bound the norm of the accumulated error from above by $\frac{1}{32}$. The success probability can then decrease by at most $2\cdot \frac{1}{32}$.

Combining these arguments establishes a variant of Lemma~\ref{lem:generalPE} without the 
unnecessary assumptions of Section \ref{perfect}:  
\begin{lemma}
\label{lem:approxscheme}
Let $\ep_{pe}\in (0,1)$. For each $i=0,\ldots,\ell-1$, there exists a quantum approximation scheme $\cA''_i$ for $\alpha_i$. Its estimate $Q''_i$ satisfies
\begin{equation}
	\Pr\big[(1-\ep_{pe}) \alpha_i \leq Q''_i \leq (1+\ep_{pe}) \alpha_i \big] \geq \frac{3}{4}.
\end{equation}
This scheme invokes the controlled version of a walk operator from $\{W(P_1),\ldots,W_{\ell-1}\}$ 
\begin{equation}\label{eq:costApproxA}
		O\left(
	\frac{\ell}{\sqrt{\delta}} \, \log^2 \ell 
	+ \frac{1}{\ep_{pe} \sqrt{\delta}}\, \log \ep_{pe}^{-1}
	\right)\,.
\end{equation}
\end{lemma}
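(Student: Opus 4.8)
The plan is to assemble Lemma~\ref{lem:approxscheme} by combining the three approximate-implementation lemmas (Lemmas~\ref{samplinglemma}, \ref{reflectlemma}) with the perfect-case scheme from Lemma~\ref{lem:generalPE}, tracking how each source of error eats into the success probability and how the costs add up. Recall that in the perfect case, Lemma~\ref{lem:generalPE} applied to $|\psi_i\>$ and the projector $P$ gives a scheme for $\alpha_i$ succeeding with probability at least $\frac{7}{8}$, using one copy of $|\pi_i\>$ and $O(\ep_{pe}^{-1})$ controlled-$R_i$ operations. My task is to re-run this scheme with two substitutions: the perfect quantum sample $|\pi_i\>$ replaced by the approximately prepared $|\tilde\pi_i\>$ of Lemma~\ref{samplinglemma}, and the perfect controlled reflection $R_i$ replaced by the approximate reflection $\tilde R_i$ of Lemma~\ref{reflectlemma}, and then show the resulting scheme still succeeds with probability at least $\frac{3}{4}$.

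First I would handle the state-preparation error. Choosing $\ep_S = \frac{1}{32}$ in Lemma~\ref{samplinglemma}, the prepared state $|\tilde\pi_i\>$ is within trace (equivalently, Euclidean) distance $\frac{1}{32}$ of $|\pi_i\>|0\>^{\otimes a}$; applying the efficiently-implementable unitary $V_i$ (acting trivially on the ancillas) yields $|\tilde\psi_i\> = V_i(|\tilde\pi_i\>\otimes|0\>)$ within $\frac{1}{32}$ of $|\psi_i\>|0\>^{\otimes a}$. Since any quantum circuit is $1$-Lipschitz in operator norm on input states and the final measurement probabilities are $1$-Lipschitz in the output state's Euclidean distance, feeding $|\tilde\psi_i\>$ instead of $|\psi_i\>$ into the Lemma~\ref{lem:generalPE} scheme changes the probability of the ``good'' outcome (estimate within $\pm\ep_{pe}\alpha_i$) by at most $2\cdot\frac{1}{32}$. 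Next I would handle the reflection error. Lemma~\ref{lem:generalPE} invokes the controlled-$R_i$ operator $O(\ep_{pe}^{-1})$ times; replacing each invocation by $\tilde R_i$ from Lemma~\ref{reflectlemma} introduces an error vector of norm at most $\ep_R$ per use, and these accumulate additively (by the triangle inequality, since each intervening unitary preserves norm), giving a total deviation in the final state of $O(\ep_{pe}^{-1})\cdot\ep_R$. Choosing $\ep_R = \Omega(\ep_{pe})$ — concretely $\ep_R$ a sufficiently small constant times $\ep_{pe}$ — bounds this by $\frac{1}{32}$, so the good-outcome probability drops by at most another $2\cdot\frac{1}{32}$. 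Altogether the success probability is at least $\frac{7}{8} - 4\cdot\frac{1}{32} = \frac{7}{8} - \frac{1}{8} = \frac{3}{4}$, as claimed.

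For the cost bookkeeping: preparing $|\tilde\pi_i\>$ once with $\ep_S = \frac{1}{32}$ costs $O\!\big(\frac{\ell}{\sqrt\delta}\log^2\ell\big)$ walk-operator applications by Lemma~\ref{samplinglemma} (and Lemma~\ref{lem:generalPE} needs only one copy of the quantum sample, so the $\log^2\frac{\ell}{\ep_S}$ becomes $\log^2\ell$). Each of the $O(\ep_{pe}^{-1})$ approximate reflections $\tilde R_i$ costs $O\!\big(\frac{1}{\sqrt\delta}\log\frac{1}{\ep_R}\big) = O\!\big(\frac{1}{\sqrt\delta}\log\ep_{pe}^{-1}\big)$ walk-operator applications by Lemma~\ref{reflectlemma}, for a subtotal of $O\!\big(\frac{1}{\ep_{pe}\sqrt\delta}\log\ep_{pe}^{-1}\big)$. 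Summing the two contributions gives exactly the bound~\eqref{eq:costApproxA}. One should also note that the controlled reflection around $|\psi_i\>$ that Lemma~\ref{lem:generalPE} actually uses is realized, as in the proof of Theorem~\ref{th:perfectZ}, by one use of the (now approximate) controlled reflection around $|\pi_i\>$ conjugated by $V_i$, so no extra factor appears.

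The main obstacle is not any single estimate but the careful \emph{composition} of the two error sources: one must be sure that the state-preparation error and the accumulated reflection error are both measured in the same norm in which the final measurement statistics are Lipschitz, and that they combine additively rather than multiplicatively so that two independent budgets of $\frac{1}{16}$ suffice. The subtlety is that the reflection error $|\xi\>$ in Lemma~\ref{reflectlemma} carries a nontrivial ancilla component (it need not be of the product form $(\cdot)\otimes|0\>^{\otimes b}$), so after the phase-estimation circuit one is comparing the ideal output state with a perturbed state that lives in a larger Hilbert space; the argument relies on the fact that tracing out ancillas and applying a fixed measurement are both contractions, so the $O(\ep_{pe}^{-1})\ep_R$ bound on the perturbed state's distance still controls the change in outcome probabilities. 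Once this is set up cleanly, the rest — the Lipschitz estimates, the triangle inequality for accumulated error, and the arithmetic $\frac{7}{8}-4\cdot\frac{1}{32}=\frac{3}{4}$ — is routine.
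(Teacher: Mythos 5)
Your proposal is correct and follows essentially the same route as the paper: it runs the Lemma~\ref{lem:generalPE} scheme with the approximately prepared state ($\ep_S=\frac{1}{32}$, costing the first term of \eqref{eq:costApproxA}) and with approximate reflections ($\ep_R=\Omega(\ep_{pe})$ so the accumulated error is at most $\frac{1}{32}$, costing the second term), each substitution reducing the success probability by at most $\frac{1}{16}$, yielding $\frac{7}{8}-\frac{1}{8}=\frac{3}{4}$ exactly as in the paper's argument. Your additional remarks on measuring both errors in the same norm and on the ancilla component of the reflection error are sound refinements of the same bookkeeping, not a different approach.
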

\begin{proof}
The success probability of the scheme in Lemma \ref{lem:generalPE} was greater than $\frac{7}{8}$. 
Both the approximate state preparation and using approximate reflections reduce the overall 
probability of success by at most $\frac{1}{16}$. Thus the probability of success of the method
given in Lemma~\ref{lem:generalPE} after dropping the unnecessary assumptions is at least $\frac{3}{4}$.
\end{proof}

We can finally complete the proof of Theorem \ref{mainresult} by following the procedure that led to the proof of Theorem \ref{th:perfectZ} in Section \ref{perfect}. 
\begin{proof}[Proof of Theorem \ref{th:perfectZ}]
For each $i=0,\ldots,\ell-1$, we proceed as follows.
We use the approximation scheme $\cA''_{i}$ from Lemma~\ref{lem:approxscheme} with precision $\epsilon_{pe}=\ep/(2\ell)$.
We then boost the success probability of each $\cA_i''$ to $1-\frac{1}{4\ell}$ by applying the powering lemma (Lemma~\ref{lem:powering}) with $\delta_{boost}=1/(4\ell)$. This step increases the cost in (\ref{eq:costApproxA}) by the factor $O(\log \ell)$. This resulting scheme $\cA_i$ now satisfies the properties required for Lemma~\ref{lem:boundProdEstimates}. 
We can thus use it to obtain a composite approximation scheme whose output satisfies 
\begin{equation}
	\Pr\big[(1-\epsilon) Z \leq Q \leq (1+\epsilon) Z \big] \geq \frac{3}{4}\,.  
\end{equation}
The resulting cost of this scheme (the number of times we have to invoke the controlled 
quantum walk operators) is
\begin{equation}
	O\left(
		\frac{\ell^2}{\sqrt{\delta}} \, \log^3 \ell 
		+ 
		\frac{\ell^2}{\ep\sqrt{\delta}} 
		(\log\ell) 
		\left(
		\log \ell 
		+
		\log \ep^{-1} 
		\right)
	\right)
	=
	\tilde{O}
		\left(
			\frac{\ell^2}{\ep \sqrt{\delta}}
		\right)\,.
\end{equation}
\end{proof}


\section{Conclusions}\label{annealingsection}
		
We have shown how to quantumly speed up a classical FPRAS for approximating
partition functions, as measured in the number of times we have to invoke
a step of a quantum walk (instead of classical Markov chains). 
We obtained two reductions in complexity: $1/\delta \rightarrow 1/\sqrt{\delta}$ and 
$1/\epsilon^2 \rightarrow 1/\ep$.  
These two reductions are intimately related; they cannot occur separately.
If we used quantum samples merely to obtain classical samples (i.e., if we tried to estimate the ratios without phase estimation), then this would lead to $O(\ell^3)$ dependence (for $\ep\propto \ell^{-1}$).  This is because we would have to take $O(\frac{\ell}{\epsilon^2})$ classical samples for each $i$ and producing a quantum sample costs at least $O(\ell)$. The advantage of our approximation procedure based on quantum phase estimation is that it requires only one quantum sample (or more precisely, $\log \ell$, after using the powering lemma to boost the success probability).  We cannot obtain the second speed-up without using quantum samples (as mentioned in the introduction, this prevents us from using a procedure such as \cite{Richter1} that prepares density operators encoding stationary distributions).
Also, the arguments employed in the error analysis in the quantum case are quite different from those in the classical error analysis.

Each classical FPRAS we speed up uses the telescoping trick \eqref{telescope},
a particular cooling schedule (decreasing sequence of temperatures),
and slowly-varying Markov chains which mix rapidly, with stationary distributions 
equal to the Boltzmann distributions at the intermediate temperatures.
The classical FPRAS is useful only when we have the Markov chains with the required properties. Moreover, the cooling schedules need to be such that the ratios $\alpha_i$ \eqref{alpharatio}
are lower bounded by some $c^{-1}$.
In \cite{StefankovicVempalaVigoda}, the authors show that it is possible to use a cooling schedule $T_0=\infty > T'_1 > \ldots > T'_{\ell'-1} = T_F$ 
for estimating the partition function $Z(T_F)$ as long as for each $i$,
\begin{equation}\label{eq:Chebyshev}
\frac{\EE{Y_i^2}}{\big(\EE{Y_i}\big)^2} \le b,
\end{equation}
where $b$ is some constant.  Such a cooling schedule is called a Chebyshev cooling schedule.  Note that the above condition is automatically satisfied in the situation we consider in this paper, but not vice versa (recall that we assume that we have a cooling schedule such that $\EE{Y_i}$ is bounded from below by a constant for each $i$; we set it to $\frac{1}{2}$ for simplicity of presentation).  The advantage of Chebyshev cooling schedules is that they are provably shorter.  The authors present an adaptive algorithm for constructing Chebyshev cooling schedule.  We plan to explore if it is possible to speed up this process.  But even if this is possible, a potential obstacle remains.  It is not clear whether we can still obtain the reduction from $\frac{1}{\epsilon^2}$ to $\frac{1}{\epsilon}$ when we only know that the condition (\ref{eq:Chebyshev}) is satisfied.  It seems that the condition $\EE{Y_i}>c^{-1}$ with $c$ some constant is absolutely necessary for phase estimation to yield the quadratic speed-up with respect to the accuracy parameter $\epsilon$.

The combination of simulated annealing and the Monte Carlo Markov Chain method used in approximating partition functions is the central piece of the best currently known algorithm for estimating permanents with non-negative entries \cite{Vazirani}.  We therefore plan to explore where our techniques can be used to speed up this breakthrough classical algorithm.
	
	
	\section{Acknowledgments}
		A.~A., C.~C. and P.~W. gratefully acknowledge the support by NSF grants
		CCF-0726771 and	CCF-0746600. D.~N. gratefully acknowledges support by
		European Project QAP 2004-IST-FETPI-15848	and by the Slovak Research 
		and Development Agency under the contract No. APVV-0673-07.
		

\appendix

\section{Quantum Walks from Classical Markov Chains}
\label{walksection}

The class of classical approximation schemes that we speed up uses reversible, 
ergodic Markov chains $P_i$ with stationary distributions $\pi_i$. 
Here we briefly review the quantum analogue of a Markov chain, 
describing the {\em quantum walk operator} $W$ corresponding to the classical Markov Chain $P$.

In each step of a Markov chain $P$ with state space $\Omega$, the probability of a state $x$ to transition to 
another state $y$ is given by the element $p_{xy}$ of the $D\times D$ transition matrix, 
where $D=|\Omega|$. Following Szegedy \cite{Szegedy}, for each such Markov Chain, 
we can define its quantum analogue. 
The Hilbert space on which this quantum operation acts is $\C^{D}\otimes\C^{D}$, with two $\C^{D}$ registers.
We start by defining the states
\begin{eqnarray}
	\ket{p_x} = \sum_{y\in \Omega} \sqrt{p_{xy}}\ket{y}.
\end{eqnarray}
These states can be generated by a {\em quantum update} -- any unitary $U$ that satisfies
\begin{eqnarray}
	U \ket{x}\ket{0} = \ket{x} \ket{p_x}
\end{eqnarray}
for some fixed state $0\in \Omega$ and all $x\in \Omega$.
The quantum analogue of a Markov chain is then defined as follows.
\begin{definition}[Quantum Walk]
	A quantum walk $W(P)$ based on a classical reversible Markov chain $P$ is a unitary operation
	acting on the space $\C^D\otimes \C^D$ as
	\begin{eqnarray}
		W(P) = R_\cB \cdot R_\cA,
	\end{eqnarray}
	where $R_\cB$ and $R_\cA$ are reflections about the spaces
	\begin{eqnarray}
		\cA &=& \mathrm{span} \{ \ket{x}\ket{0} : x\in \Omega \}, \\
		\cB &=& U^\dagger S U \cA,
	\end{eqnarray}
	and $S$ is a swap of the two registers.
\end{definition}
This particular definition of the quantum walk is suitable for making some of the proofs
in \cite{WA:08} easier. It is equivalent to the standard definition of Szegedy \cite{Szegedy}
up to conjugation by $U$. Therefore, the spectral properties of our $W$ and Szegedy's quantum walk
are the same. 

Let $\delta$ be the spectral gap of the classical Markov chain $P$. Let us write
its eigenvalues as $\mu_0 = 1$ and $\mu_j = \cos(\theta_j)$, for $j=1,\dots,D-1$ and 
$\theta_j \in \left(0,\frac{\pi}{2}\right)$. 
According to Szegedy \cite{Szegedy}, on the space $\cA + \cB$, the eigenvalues of the quantum walk $W(P)$ 
with nonzero imaginary part are $e^{\pm 2i\theta_j}$. The phase gap of the quantum walk $W(P)$ is then defined
as $\Delta = 2\theta_1$ (with $\theta_1$ the smallest of $\theta_j$). When the Markov chain is ergodic
and reversible, Szegedy proved that
\begin{eqnarray}
	\Delta \geq 2\sqrt{\delta},
\end{eqnarray}
a quadratic relation between the phase gap $\Delta$ of the quantum walk $W(P)$
and the spectral gap $\delta$ of the classical Markov chain $P$. This quadratic relation is
behind the speed-up of many of today's quantum walk algorithms. 


\end{document}